\pgfplotsset{compat=newest}
\newtheorem{theorem}{Theorem}
\newtheorem{lemma}{Lemma}
\newtheorem{corollary}{Corollary}
\newcommand{\Ber}{\text{Ber}}
\newcommand{\x}{\mathbf x}
\newcommand{\X}{\mathbf X}
\newcommand{\q}{\mathbf q}
\newcommand{\U}{\mathbf U}
\newcommand{\Y}{\mathbf Y}
\newcommand{\expectation}{\mathbb E}
\newcommand{\Uest}{\widehat U}
\newcommand{\Errors}{\mathcal E}
\newcommand{\nTests}{\mathit T}
\DeclareMathOperator*{\argmin}{arg\,min}
\newif\ifnotes
\newcommand{\pavlos}[1]{\ifnotes{{\sf\color{green!50!black} [Pavlos: #1]}}\fi}
\newcommand\numberthis{\addtocounter{equation}{1}\tag{\theequation}}
\begin{document}
\title{Improving Group Testing via  Gradient Descent}

\author{%
  \IEEEauthorblockN{Sundara Rajan Srinivasavaradhan\IEEEauthorrefmark{1},
                    Pavlos Nikolopoulos\IEEEauthorrefmark{2},
                    Christina Fragouli\IEEEauthorrefmark{1},
                    Suhas Diggavi\IEEEauthorrefmark{1}}
  \IEEEauthorblockA{\IEEEauthorrefmark{1}%
                    University of California, Los Angeles, Electrical and Computer Engineering,\\ \{sundar, christina.fragouli, suhasdiggavi\}@ucla.edu}
  \IEEEauthorblockA{\IEEEauthorrefmark{2}%
                    EPFL, Switzerland, pavlos.nikolopoulos@epfl.ch}
}

\maketitle

\begin{abstract}
We study the problem of group testing with non-identical, independent priors. So far, the pooling strategies that have been proposed in the literature take the following approach: a hand-crafted test design along with a decoding strategy is proposed, and guarantees are provided on how many tests are sufficient in order to identify all infections in a population. 
In this paper, we take a different, yet perhaps more practical, approach: 
we fix the decoder and the number of tests, and we ask, given these, what is the \textit{best} test design one could use? 
We explore this question for the Definite Non-Defectives (DND) decoder. We formulate a (non-convex) optimization problem, where the objective function is the expected number of errors for a particular design. We find approximate solutions via gradient descent, which we further optimize with informed initialization. We illustrate through simulations that our method can achieve significant performance improvement over traditional approaches.
\end{abstract}

\section{Introduction}
Group testing has recently attracted significant attention in the context of COVID~(\cite{art1,art2,art4,Cov-GpTest-1,Cov-GpTest-2,kucirka2020-PCR}), and several countries (including India, Germany, US, and China) have already deployed preliminary group-testing strategies (\cite{GroupTest-implement1,GroupTest-implement2-FDA}). 

Group testing has a rich history in academia and a number of variations and setups have been examined so far (\cite{Dorfman,GroupTestingMonograph,GroupTestingBook,nested}). Simply stated, group testing assumes a population of $N$ individuals out of which some are infected, and the goal is to design testing strategies and corresponding decoding algorithms to identify the infections from the test results.
Most works revolve around proposing a particular hand-crafted test design (e.g. random Bernoulli design) coupled with a decoding strategy (e.g. Definite Defectives, Definite Non-Defectives), and guarantees are provided on the number of tests required to achieve vanishing probability of error. 
Additionally, order-optimality results have been proved for the asymptotic regime, where the population size tends to infinity. 

To the best of our knowledge, the following complementary question remains unexplored: Given a fixed decoding strategy and a given number of tests $\nTests$ (perhaps smaller than what is needed to achieve zero error), what is the \textit{best} test design one may use? 
We examine this question in the context of nonadaptive group testing, and under the assumption of a Definite Non-Defectives (DND) decoder, which eliminates false negatives by construction.\footnote{Interestingly, a discussion of one of the authors with the General Secretary of Public Health in an EU state has revealed that this question is perhaps the most relevant in practice, as both private and public lab facilities have limited testing capacity per day, and what actually matters is how to use the available tests most efficiently. }

In this paper, we show that the above problem can be formulated as a non-convex continuous optimization problem. More specifically, the problem requires finding a test-design matrix $G$ that minimizes the expected number of erroneous identifications (i.e. false positives). 
This, however, presents two challenges: 
(a) the analytical computation of the expected number of false positives turns out to be computationally difficult; and
(b) because $G \in \{0,1\}^{\nTests \times N}$, we are faced with a combinatorial optimization problem.

To address these challenges, we proceed as follows: 
First, we provide a lower bound on the expected number of errors, which we use as a proxy in the optimization problem; that bound can be computed in $O(N^2)$ runtime.
We then relax the combinatorial optimization problem based on an equivalence result; the objective function in that relaxed formulation as well as its gradient can be computed in $O(N^2)$, thus enabling the use of Gradient Descent (GD).
To further improve the performance of our method, we propose two approaches to GD: 
(i) an informed initialization with information from classic test designs, such as the Constant Column Weight (CCW) design and the Coupon Collector Algorithm (CCA);
(ii) a stochastic re-initialization of the state of the solution every few gradient iterations (e.g. 100 iterations), in a way that allows GD to explore across various neighborhoods, while also ensuring that the objective value does not increase by much with each re-initialization. 

Numerical evaluations show that the GD based approaches can significantly outperform classical test designs, achieving up to $58\%$ fewer errors with the DND decoder on simulated infection models. Rather surprisingly, GD based designs also significantly outperform  classical test designs when the decoder is switched to definite defectives (DD), indicating 
transferability to other decoders.

\textbf{Related work:} We here give a brief overview of group testing; the exact problem we consider in this work will be detailed in \Cref{sec:problem_form}. 

Three infection models are usually studied in the group testing literature: (i) in the {\em combinatorial priors model},  a fixed number of  individuals $k$ (selected uniformly at random), are infected; (ii) in \textit{i.i.d probabilistic priors model}, each individual is i.i.d infected with some probability $p$;
(iii) in the {\em non-identical probabilistic priors model}, each item $i$ is infected independently of all others with prior probability $p_i$, so that the expected number of infected members is $\bar{k} = \sum_{i=1}^{N}p_i$. Infection models (i) and (ii) have received attention from researchers for the most part (see for example, \cite{bernoulli_testing2,bernoulli_testing1,Nonadaptive-1,PhaseTrans-SODA16,ncc-Johnson,individual-optimal,bay2020optimal,coja-oghlan20a, armendariz2020group,price2020fast,coja-oghlan19}). 
Infection model (iii) is the most general, yet also the least studied one \cite{prior}; we refer the reader to \cite{GroupTestingMonograph} for an excellent summary of existing work on the above infection models.
Our work assumes  infection model (iii) with non-identical probabilistic priors and accepts (ii) as a special case.

Tangentially, recent works have considered correlated infection models; see, for example, \cite{techRpt,GroupTesting-community-nonOverlap,GroupTesting-community-overlap, zhu2020noisy, goenka2020contact, ayfer2021adaptive, bertolotti2020network}.

\section{Preliminaries}

In this section, we first precisely formulate the problem of interest, and then state a simple lemma on combinatorial optimization that is used in our work. 

\subsection{Problem formulation}
\label{sec:problem_form}
We consider the noiseless nonadaptive group testing problem with non-identical priors. There are $N$ individuals in the population, where individual $i$ is infected independently with probability $p_i$.  We assume that the value of $p_i$ is known apriori\footnote{This is a standard assumption in group testing. Otherwise, epidemiological models for disease spread can be used to estimate these probabilities (\cite{mathOfEpidemicsOnNetworks, srinivasavaradhan2021dynamic,isit-paper}).}.  Let $U_i$ be the infection status of individual $i$: $U_i=\mathbbm{1}\{\text{Individual }i\text{ is infected}\}$. As a result, $U_i\sim \Ber(p_i)$.  We will denote by $\U=(U_1,U_2,...,U_N)$ the vector of infection statuses.

\textbf{Testing matrix:} A testing matrix $G\in \{0,1\}^{T\times N}$ is a $T\times N$ binary matrix. Row $t$ in the testing matrix represents the individuals participating in test $t$, i.e., $G_{ti}=1$ represents individual $i$ participating in test $t$. The test results corresponding to a particular realization of $\mathbf U=(U_1,U_2,...,U_N)$ and $G$ is defined as the vector $\Y=(Y_1,Y_2,...,Y_T)$ where 
\vspace{-2mm}
\begin{align*}
\label{eq:testing}
    Y_t = 1-\prod_{i=1}^N (1-G_{ti} U_i). \numberthis
\end{align*}
In words, the test $t$ gives a positive result if any of the individuals participating in the test are infected, otherwise it gives a negative result\footnote{Most works in group testing express the right-hand side of \eqref{eq:testing} as a Boolean expression. However, we use this particular form (similar expression was given in \cite{armendariz2020group}) as it easily admits continuous-valued relaxations of the composing variables.}.
In \eqref{eq:testing} $Y_t=1$ if and only if there exists $i$ such that both $G_{ti}=1$ and $U_i=1$ (individual $i$ is infected). In order to infer $\U$ from $Y$, a \textit{decoding algorithm} $r:\{0,1\}^T\rightarrow \{0,1\}^N$ constructs an estimate $\widehat \U$ of the infection statuses from the test results. In this work, we fix the decoding algorithm, which we describe next.

\textbf{DND decoder:} The definite non-defective (DND) decoder is a well-known decoding algorithm that forms an estimate of $\U$ by identifying those individuals who have participated in at least one negative test as healthy and labeling every other individual as infected -- i.e., it operates under the principle ``every item is defective unless proved otherwise''. More precisely, it outputs an estimate $\widehat{\U}$ where
\vspace*{-2mm}
\begin{align*}
\label{eq:dnd}
    \Uest_i = \prod_{t=1}^T Y_{t}^{G_{ti}}. \numberthis
\end{align*}
$\widehat{\U}$ has zero false negatives by construction -- it can be seen that $\widehat U_i=1$ whenever $U_i=1$. The number of errors (false positives) that the DND decoder makes for a particular realization $\U$ is given by 
\begin{align*}
     \sum_{i=1}^N\mathbbm{1}\{\Uest_i \neq U_i\} = \sum_{i=1}^N\mathbbm{1}\{U_i=0\}\mathbbm{1}\{\widehat U_i=1|U_i=0\},
\end{align*}
and as a result the expected number of errors $\Errors(G)$ under the DND decoder for a given  $G$ is
\begin{align*}
\label{eq:errors_temp}
    \Errors(G) &\triangleq \expectation \left[\sum_{i=1}^N\mathbbm{1}\{\Uest_i \neq U_i\}\right]\\ &=  \sum_{i=1}^N\Pr(U_i=0) \Pr(\Uest_i =1 | U_i=0)\\
    &= \sum_{i=1}^N (1-p_i) \expectation  \left[\widehat U_i|U_i=0 \right]. \numberthis
\end{align*}
Further, when $U_i$ is fixed to be 0, $\widehat U_i$ is a function of $G$ and $\U\setminus \{i\}$, where $\U\setminus \{i\}\triangleq (U_1,...,U_{i-1},U_{i+1},...,U_N)$ denotes the vector $\U$ without its $i^{th}$ entry. Thus, fixing $U_i=0$, and  using \eqref{eq:testing} and \eqref{eq:dnd} we have, 
\begin{align*}
    \widehat U_i &= \prod_{t=1}^T \left(1-\prod_{\substack{j=1:\\j\neq i}}^N  (1-G_{tj} U_j)\right)^{G_{ti}}\\
    &\overset{(a)}{=} \prod_{t=1}^T \left(1-{G_{ti}}\prod_{\substack{j=1:\\j\neq i}}^N  (1-G_{tj} U_j)\right), 
\end{align*}
where $(a)$ follows because of the following fact: $(1-x)^y=1-xy$ if $y\in \{0,1\}$. Now,
denoting $\gamma_{t,i} \triangleq \left(1-{G_{ti}}\prod_{\substack{j=1:\\j\neq i}}^N  (1-G_{tj} U_j)\right)$ in the above expression, we rewrite \eqref{eq:errors_temp} as:
\begin{align*}
\label{eq:errors_function}
    \Errors(G)    &=  \sum_{i=1}^N (1-p_i) \expectation_{\U\setminus \{i\}} \prod_{t=1}^T \gamma_{t,i}.\numberthis
\end{align*}

\begin{tcolorbox}
\textbf{Our Goal:} We want to minimize $\Errors(G)$ across all binary matrices $G$ of size $T\times N$, i.e., solve
\begin{align*}
\label{eq:optimization_comb}
    G_{opt} = \argmin_{\substack{G\in \{0,1\}^{T\times N}}} \Errors(G). \numberthis
\end{align*}
\end{tcolorbox}

\textbf{Discussion:} We first observe that $\gamma_{t,i}$ is not independent of $\gamma_{t',i}$ for $t\neq t'$ as they potentially share common $U_j$ terms. As a result, the expectation of the product term in \eqref{eq:errors_function} is not trivially the product of expectations, which makes the computation of $\Errors(G)$ intractable in general (indeed one could estimate $\Errors(G)$ using Monte-Carlo methods, belief propagation etc.). In \Cref{sec:main_results} we provide a lower bound for $\Errors(G)$ which can be computed efficiently, and which we use as a proxy for $\Errors(G)$.

We also note that in principle, \eqref{eq:optimization_comb} could be formulated for any decoder, not just the DND decoder. However, the particular nature of $\Errors(G)$  for the DND decoder admits a nice form, for which we can propose an approximate solution using  lower bounding techniques (\Cref{sec:main_results}). For decoders such as the definite defective decoder or belief propagation based ones, we currently do not have an approach to calculate a non-trivial lower bound; this remains a challenging open problem.

\subsection{A combinatorial relaxation result}
We now take a detour to prove a simple result that allows one to relax combinatorial optimization problems that aim to optimize over the vertices of an $n$-dimensional hypercube. One could extend this technique for optimization over other finite sets as well. 
\begin{lemma}
\label{lemma:comb_rel}
In order to solve
\begin{align*}
\label{eq:lemma1_1}
    \argmin_{\substack{\x\in \{0,1\}^n}}\ g(\x), 
    \numberthis
\end{align*}
\vspace{-2mm}
 it is sufficient to solve \vspace{-2mm}
\begin{align*}
\label{eq:lemma1_2}
    \argmin_{\substack{\q\in [0,1]^n}}\ f(\q), 
    \numberthis
\end{align*}
\vspace{-3mm}
 $$\text{where }f(\q)\triangleq \expectation_{\X \sim \q}\ g(\X)$$ can be envisioned as a continuous extension of $g(\x)$. The expectation in the above expression is taken w.r.t the distribution where each $X_i \sim  \Ber(q_i)$, and the $X_i$s are independent of each other.
\end{lemma}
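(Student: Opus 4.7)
The plan is to exploit the observation that $f(\q)$, being an expectation over independent Bernoulli coordinates, is a multilinear polynomial in $\q=(q_1,\ldots,q_n)$. First I would expand
\begin{equation*}
f(\q) = \sum_{\x\in\{0,1\}^n} g(\x)\prod_{i=1}^n q_i^{x_i}(1-q_i)^{1-x_i},
\end{equation*}
and note that with all coordinates other than $q_i$ held fixed, $f$ is affine in $q_i$. Consequently, its minimum over $q_i\in[0,1]$ is always attained at one of the endpoints in $\{0,1\}$.

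Building on this, I would argue by coordinate-by-coordinate rounding: starting from any $\q\in[0,1]^n$, replace $q_1$ by whichever of $0$ or $1$ minimizes $f$ (ties broken arbitrarily), then do the same with $q_2$, and so on. After $n$ such steps the procedure terminates at a vertex $\x\in\{0,1\}^n$ satisfying $f(\x)\leq f(\q)$. At any vertex of the hypercube the Bernoulli variables become deterministic, i.e.\ $\X=\x$ almost surely, hence $f(\x)=g(\x)$.

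Combining these two facts then gives both directions. For every $\q\in[0,1]^n$, the rounding produces $\x\in\{0,1\}^n$ with $g(\x)=f(\x)\leq f(\q)$, so $\min_{\x\in\{0,1\}^n} g(\x)\leq \min_{\q\in[0,1]^n} f(\q)$. The reverse inequality is immediate from the inclusion $\{0,1\}^n\subseteq[0,1]^n$ together with the fact that $f$ agrees with $g$ on binary inputs. Hence the two optimal values coincide, and any vertex obtained by rounding a minimizer of \eqref{eq:lemma1_2} is a minimizer of \eqref{eq:lemma1_1}.

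I do not expect any genuine obstacle here: the only ingredient that really needs emphasis is the multilinearity of $f$, which is what makes its coordinate-wise minima always land at $\{0,1\}$; once that is in place, the rounding procedure is completely mechanical and ties at intermediate steps can be resolved arbitrarily without affecting the argument.
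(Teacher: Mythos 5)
Your proof is correct, but it takes a genuinely different route from the paper's. The paper's argument is purely probabilistic and very short: for any $\q\in[0,1]^n$ the random variable $g(\X)$ is supported on $\{0,1\}^n$, so $\expectation_{\X\sim\q}\,g(\X)\geq \min_{\x\in\{0,1\}^n} g(\x)$; combined with the fact that choosing $\q^*=\x^*$ (a point mass at a minimizer) attains this value, the two optima coincide, and a solution to \eqref{eq:lemma1_1} is recovered by \emph{sampling} $\X\sim\q^*$ (which works because, at the optimum, $g(\X)$ equals its minimum almost surely). You instead observe that $f$ is multilinear, hence affine in each coordinate, and derandomize by coordinate-wise rounding to a vertex without increasing $f$ --- essentially the method of conditional expectations. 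Your approach buys a deterministic extraction procedure and makes explicit why interior points can never beat vertices; the paper's approach is more elementary (it needs nothing about $f$ beyond the support of the distribution, not even multilinearity) and shorter, at the cost of returning a randomized solution. Both correctly establish $\min_{\q\in[0,1]^n} f(\q) = \min_{\x\in\{0,1\}^n} g(\x)$ and a way to convert a solution of \eqref{eq:lemma1_2} into one of \eqref{eq:lemma1_1}.
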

\noindent We refer the reader to \Cref{app:lemma1proof} for the proof.

\textbf{Remark:}
There is a long history of using relaxation techniques to approximate solutions of combinatorial optimization problems (see \cite{trevisan2011combinatorial} for an  overview). Most of these focus on linear programming relaxation techniques. In \Cref{lemma:comb_rel}, there is no assumption on $g(\cdot)$ whatsoever and the resulting relaxation may not be a linear program. Moreover, it may not be easy to compute $f(\cdot)$ in all cases and 
it may also not be easy to compute the gradient $\nabla f(\cdot)$ as well. In cases where exactly computing or approximating the gradient is easy (as is indeed the case in this work), one can use first-order optimization techniques such as GD.

\section{Main results}
\label{sec:main_results}
In this section, we delineate our approach to find an approximate solution to \eqref{eq:optimization_comb}. Following the discussion at the end of \Cref{sec:problem_form}, our approach is three-fold: First, we lower bound $\Errors(G)$ by another function $\Errors_{LB}(G)$, whose computation turns out to be  tractable; we then use $\Errors_{LB}(G)$ as a proxy for $\Errors(G)$. Next, we use \Cref{lemma:comb_rel} to show that it is sufficient to consider a continuous relaxation of the resulting combinatorial optimization problem. Finally, we show that the objective function in the continuous relaxation and its gradient can also be computed efficiently, thus enabling gradient descent.

\subsection{A lower bound for $\Errors(G)$}
As a first step, the following theorem states and proves a lower bound for $\Errors(G)$.

\begin{theorem}
\label{thm:Errors_LB}
Consider a random vector $\U=(U_1,U_2,...,U_N)$ where $U_i\sim \Ber(p_i)$. For a given testing matrix $G$, and under the DND decoder, the expected number of errors (see \eqref{eq:errors_function}) satisfies
$$\Errors(G) \geq \Errors_{LB}(G),$$
where 
\vspace{-5mm}
$$\Errors_{LB}(G) \triangleq \sum_{i=1}^N (1-p_i) \prod_{t=1}^T  \left(1-{G_{ti}}\prod_{\substack{j=1:\\j\neq i}}^N  (1-G_{tj} p_j)\right).$$
\end{theorem}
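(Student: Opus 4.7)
The plan is to prove the inequality coordinate-by-coordinate in $i$, reducing to showing that for each fixed $i$,
$$\expectation_{\U\setminus\{i\}}\Bigl[\prod_{t=1}^T \gamma_{t,i}\Bigr] \;\geq\; \prod_{t=1}^T \expectation_{\U\setminus\{i\}}\bigl[\gamma_{t,i}\bigr].$$
Summing this over $i$ with the nonnegative weights $(1-p_i)$ gives the theorem, so the entire proof collapses onto this single correlation inequality.

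First I would verify the monotonicity structure. For fixed $i$ and $t$, the inner product $\prod_{j \neq i}(1-G_{tj}U_j)$ is nonincreasing in each $U_j \in \{0,1\}$: raising $U_j$ from $0$ to $1$ either leaves the factor at $1$ (if $G_{tj}=0$) or sends it to $0$ (if $G_{tj}=1$). Consequently $\gamma_{t,i} = 1 - G_{ti}\prod_{j\neq i}(1-G_{tj}U_j)$ is a \emph{nondecreasing} function of each $U_j$ (for all $t$ simultaneously, with the same direction of monotonicity). Since the $U_j$'s for $j \neq i$ are mutually independent, Harris's inequality (the FKG inequality specialized to product measures on $\{0,1\}^{N-1}$) immediately gives the positive-correlation bound above. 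I would either cite Harris/FKG directly or give a short inductive proof: pairwise, for two nondecreasing functions $f,g$ of independent variables, $\expectation[(f(X)-f(Y))(g(X)-g(Y))] \geq 0$ when $X,Y$ are i.i.d.\ copies, which expands to $\expectation[fg] \geq \expectation[f]\expectation[g]$; then chain this to $T$ factors by conditioning.

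Next, I would compute $\expectation[\gamma_{t,i}]$ in closed form. Because the $U_j$ for $j \neq i$ are independent Bernoullis,
$$\expectation\bigl[\gamma_{t,i}\bigr] \;=\; 1 - G_{ti}\,\expectation\!\Bigl[\prod_{j\neq i}(1-G_{tj}U_j)\Bigr] \;=\; 1 - G_{ti}\prod_{j\neq i}(1-G_{tj}p_j),$$
where the last step uses $\expectation[1-G_{tj}U_j] = 1-G_{tj}p_j$. Plugging this back and using the correlation inequality,
$$\expectation_{\U\setminus\{i\}}\Bigl[\prod_{t=1}^T \gamma_{t,i}\Bigr] \;\geq\; \prod_{t=1}^T\!\left(1 - G_{ti}\prod_{j \neq i}(1-G_{tj}p_j)\right),$$
and substituting into \eqref{eq:errors_function} yields exactly $\Errors(G) \geq \Errors_{LB}(G)$.

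The only real obstacle is justifying the inequality of expectation-of-product versus product-of-expectations; this is not an equality because the $\gamma_{t,i}$'s share common $U_j$ variables (precisely the issue flagged in the discussion after \eqref{eq:errors_function}). The saving grace is that the dependence goes ``the right way'': all $\gamma_{t,i}$'s are comonotone in each underlying variable, so their correlations are nonnegative. If one wishes to avoid invoking FKG/Harris by name, the cleanest self-contained route is an induction on $T$: the base case $T=1$ is trivial, and for the inductive step, condition on $U_j$ for one coordinate at a time and apply the two-function Chebyshev-type coupling $\expectation[fg]-\expectation[f]\expectation[g]=\tfrac12\expectation[(f(X)-f(Y))(g(X)-g(Y))]\geq 0$ with $f = \gamma_{T,i}$ and $g = \prod_{t<T}\gamma_{t,i}$, both nondecreasing in the remaining variables.
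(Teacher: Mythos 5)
Your proposal is correct and follows essentially the same route as the paper's proof: both reduce the claim to the per-index correlation inequality $\expectation_{\U\setminus \{i\}} \prod_{t=1}^T \gamma_{t,i} \geq \prod_{t=1}^T \expectation_{\U\setminus \{i\}}\gamma_{t,i}$, establish it by checking that each $\gamma_{t,i}$ is a non-negative, non-decreasing function of $\U$ under the product measure and then applying the FKG/Harris inequality inductively over the $T$ factors, and finish by evaluating $\expectation[\gamma_{t,i}]$ via independence. Your optional self-contained induction via the two-function coupling is just the standard proof of Harris's inequality, so it does not constitute a genuinely different argument.
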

\begin{proof}
First we recall the expression for $\Errors(G)$ in \eqref{eq:errors_function}:
\begin{align*}
    \Errors(G)    &=  \sum_{i=1}^N (1-p_i) \expectation_{\U\setminus \{i\}} \prod_{t=1}^T \gamma_{t,i}.
\end{align*}
Using the FKG inequality (see \cite{fortuin1971correlation, kemperman1977fkg, EncyMath} or proof of Lemma 4 in \cite{individual-optimal}) one could show that $$\expectation_{\U\setminus \{i\}} \prod_{t=1}^T \gamma_{t,i} \geq \prod_{t=1}^T \expectation_{\U\setminus \{i\}}  \gamma_{t,i}.$$ A rigorous proof of the above statement can be found in Appendix \ref{app:FKG}. The idea is to show that $\gamma_{t,i}$ is an increasing function on $\U$ (assuming a partial ordering); using this observation, the result follows as an application of the FKG inequality. Thus, we have
\begin{align*}
    \Errors(G)    &\geq  \sum_{i=1}^N (1-p_i) \prod_{t=1}^T  \expectation_{\U\setminus \{i\}}  \gamma_{t,i} \\
    &= \sum_{i=1}^N (1-p_i) \prod_{t=1}^T  \left(1-{G_{ti}}\prod_{\substack{j=1:\\j\neq i}}^N  (1-G_{tj} p_j)\right)\\
    &= \Errors_{LB}(G)
\end{align*}
\vspace{-3mm}
\end{proof}

In all numerical evaluations we performed, $\Errors(G)$ and the lower bound $\Errors_{LB}(G)$ were highly correlated -- we provide  example scatter plots in \Cref{fig:true_vs_LB} in \Cref{app:LBplot} -- which indicates that minimizing  $\Errors_{LB}(G)$ is a viable alternative to minimizing $\Errors(G)$.


\subsection{A continuous optimization formulation}
Given the above discussion, we now propose using  $\Errors_{LB}(G)$ as a proxy for $\Errors(G)$  -- more precisely we propose to solve the following optimization problem:
\begin{equation}
\label{eq:modified_opt}
    \argmin_{\substack{G\in\{0,1\}^{T\times N}}} \Errors_{LB}(G).
\end{equation}
We next use \Cref{lemma:comb_rel} to argue that a continuous relaxation of \eqref{eq:modified_opt} is equivalent to \eqref{eq:modified_opt}. Before stating the main result, we give a definition: we say that the matrix $G\sim Q$ (read as  ``$G$ is distributed according to the distribution matrix $Q$'') if each $G_{ti} \sim \Ber(Q_{ti})\ \forall\ t,i$ and the $G_{ti}$ variables are independent of each other.

\begin{corollary}
\label{cor:relaxation}
Suppose $U_i\sim \Ber(p_i)\ \forall\ i$. In order to solve the optimization problem
\begin{align*}
    \argmin_{\substack{G\in\{0,1\}^{T\times N}}} \Errors_{LB}(G), \numberthis
\end{align*}
 it is sufficient to solve
\begin{align*}
\label{eq:optimization_relax}
    \argmin_{\substack{Q \in [0,1]^{T\times N}}} \expectation_{G \sim Q} \Errors_{LB}(G). \numberthis
\end{align*}
\end{corollary}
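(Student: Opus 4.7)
The plan is to view this corollary as a direct consequence of \Cref{lemma:comb_rel} after vectorizing the testing matrix. Since $\Errors_{LB}$ is a scalar function of the $TN$ binary entries of $G$, the set $\{0,1\}^{T\times N}$ can be identified with the hypercube $\{0,1\}^{TN}$ by stacking the entries of $G$ into a single vector $\x$ of length $n = TN$. Under this identification, the function $g(\x) \triangleq \Errors_{LB}(G)$ is well-defined on $\{0,1\}^n$, and the optimization problem in \eqref{eq:modified_opt} coincides exactly with \eqref{eq:lemma1_1} from the lemma.

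Next, I would verify that the continuous extension $f$ provided by the lemma matches the objective in \eqref{eq:optimization_relax}. Vectorizing $Q \in [0,1]^{T\times N}$ into $\q \in [0,1]^n$, the lemma's extension is $f(\q) = \expectation_{\X \sim \q}\, g(\X)$, where each $X_i \sim \Ber(q_i)$ independently. But this is precisely the distribution ``$G \sim Q$'' defined just before the corollary statement, in which $G_{ti} \sim \Ber(Q_{ti})$ independently across $t,i$. Therefore $f(\q) = \expectation_{G \sim Q}\, \Errors_{LB}(G)$, and the relaxed problem \eqref{eq:optimization_relax} is literally \eqref{eq:lemma1_2} under the vectorization.

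Applying \Cref{lemma:comb_rel} with these identifications immediately yields the claim: solving \eqref{eq:optimization_relax} is sufficient to solve \eqref{eq:modified_opt}. I do not anticipate a hard step here, since the only substantive content is the observation that independent Bernoulli sampling of matrix entries is the same as independent Bernoulli sampling of a vectorized version of the matrix, and that the objective value is invariant under this vectorization. The real work sits in \Cref{lemma:comb_rel} itself, which has already been established. Consequently the proof is a few lines of bookkeeping rather than a new argument.
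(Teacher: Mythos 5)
Your proposal is correct and matches the paper's own argument: the paper likewise treats the corollary as a direct application of \Cref{lemma:comb_rel}, associating a parameter $Q_{ti}$ with each entry $G_{ti}$, which is exactly your vectorization of $\{0,1\}^{T\times N}$ into $\{0,1\}^{TN}$ with independent Bernoulli entries. No further comment is needed.
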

This is a direct corollary of \Cref{lemma:comb_rel}, where the objective function is $\Errors_{LB}(G)$ and we associate a parameter $Q_{ti}$ corresponding to each $G_{ti}$.

Thus, we now have the following approximate formulation for which the objective function (and its gradient) can be computed in $O(N^2)$ time complexity (see \Cref{sec:computational_aspects}). The hope is that solving \eqref{eq:optimization_LB} gives sufficiently good choices of $G\sim Q^*$; our experimental results in \Cref{sec:experiments} indicate that this is indeed the case.
\begin{tcolorbox}
\textbf{Approximate formulation:} Solve for
\begin{align*}
\label{eq:optimization_LB}
    Q^* = \argmin_{\substack{Q \in [0,1]^{T\times N}}} f(Q) , \numberthis
\end{align*}
where $f(Q)\triangleq \expectation_{G \sim Q} \Errors_{LB}(G).$
\end{tcolorbox}
Given the above formulation, we can now use techniques such as 
 gradient descent (GD) to select the testing matrix $G$. In essence, we are searching over the continuous space of distribution matrices $Q$. If the gradient of $f(Q)$ can be efficiently computed, one could use GD to converge to a local minima $Q^*$ and pick a $G\sim Q^*$.

\subsection{Expression for $f(Q)$}
\label{sec:computational_aspects}

We now give a closed-form expression for $f(Q)$ and  briefly discuss the computational complexity of computing $f(Q)$ and its gradient; the details are deferred to \Cref{app:obj_comp}, \Cref{app:grad_deriv} and \Cref{app:grad_comp}. We have,
\begin{align*}
    f&(Q) \triangleq  \expectation_{G \sim Q} \Errors_{LB}(G)\\ 
    &= \expectation_{G \sim Q} \sum_{i=1}^N (1-p_i) \prod_{t=1}^T  \left(1-{G_{ti}}\prod_{\substack{j=1:\\j\neq i}}^N  (1-G_{tj} p_j)\right)\\
    &\overset{(a)}{=} \sum_{i=1}^N (1-p_i)   \prod_{t=1}^T   \expectation_{G \sim Q} \left(1-{G_{ti}}\prod_{\substack{j=1:\\j\neq i}}^N  (1-G_{tj} p_j)\right) \\
    &= \sum_{i=1}^N (1-p_i)   \prod_{t=1}^T   \left(1-{Q_{ti}}\prod_{\substack{j=1:\\j\neq i}}^N  (1-Q_{tj} p_j)\right) \numberthis 
    \label{eq:obj_function},
\end{align*}
where in $(a)$ the expectation is pushed inside the product terms as $\Errors_{LB}(G)$ is linear when viewed as a function of a single $G_{ti}$.
In Appendix~\ref{app:obj_comp} we discuss an $O(N^2)$ algorithm that simplifies the computation of $f(Q)$ above.
Given \eqref{eq:obj_function}, one could derive an expression for the gradient $\nabla f(Q)$ by calculating each partial derivative $\pdv{f(Q)}{Q_{lm}}$. The details of the derivation can be found in Appendix~\ref{app:grad_deriv}.
Moreover, in Appendix~\ref{app:grad_comp}, we discuss the computation of $\nabla f(Q)$ in $O(N^2)$ runtime. 

\begin{figure*}
\centering
\captionsetup[subfigure]{margin=10pt}
\subcaptionbox{DND decoder. \label{fig:Exp_DND_vsT}}
{\scalebox{0.75}{
\begin{tikzpicture}

\definecolor{color0}{rgb}{0.12156862745098,0.466666666666667,0.705882352941177}
\definecolor{color1}{rgb}{1,0.498039215686275,0.0549019607843137}
\definecolor{color2}{rgb}{0.172549019607843,0.627450980392157,0.172549019607843}
\definecolor{color3}{rgb}{0.83921568627451,0.152941176470588,0.156862745098039}
\definecolor{color4}{rgb}{0.580392156862745,0.403921568627451,0.741176470588235}
\definecolor{color5}{rgb}{0.549019607843137,0.337254901960784,0.294117647058824}

\begin{axis}[
width=10cm,
height=6.5cm,
legend cell align={left},
legend style={fill opacity=0.8, draw opacity=1, text opacity=1, draw=white!80!black},
tick align=outside,
tick pos=left,
x grid style={white!69.0196078431373!black},
xlabel={\large Number of tests used ($T$)},
xmajorgrids,
xmin=75, xmax=625,
xtick style={color=black},
y grid style={white!69.0196078431373!black},
ylabel={\Large FP rate},
ymajorgrids,
ymin=-0.0207187679563151, ymax=0.471198628651759,
ytick style={color=black}
]
\addplot [thick, color0, mark=*, mark size=3, mark options={solid}]
table {%
600 0.00427569380590243
500 0.009311
400 0.023542971331108
300 0.059595
200 0.15292593541294
100 0.390479869464535
};
\addlegendentry{CCW}
\addplot [thick, color1, mark=square*, mark size=3, mark options={solid}]
table {%
600 0.0111379449570546
500 0.020582
400 0.0444206861951278
300 0.091671
200 0.204757007395776
100 0.448838746987756
};
\addlegendentry{CCA}
\addplot [thick, color2, mark=triangle*, mark size=3, mark options={solid,rotate=180}]
table {%
600 0.0100472118968084
500 0.016525
400 0.0346527093709199
300 0.070435
200 0.147266196962598
100 0.304765376444589
};
\addlegendentry{GD + 0init}
\addplot [thick, color3, mark=triangle*, mark size=3, mark options={solid,rotate=270}]
table {%
600 0.00172550005728963
500 0.003697
400 0.0106458475350865
300 0.029370
200 0.0864293276753719
100 0.242828289047508
};
\addlegendentry{GD + CCWinit}
\addplot [thick, color4, mark=asterisk, mark size=3, mark options={solid}]
table {%
600 0.00164111370768828
500 0.003909
400 0.0109949301868475
300 0.029173
200 0.0853347226381277
100 0.238642025261646
};
\addlegendentry{GD + CCAinit}
\addplot [thick, color5, mark=star, mark size=3, mark options={solid}]
table {%
600 0.00209863692850379
500 0.004236
400 0.011680301589941
300 0.030740
200 0.0843668043852834
100 0.243257506131131
};
\addlegendentry{GD + sampling}
\end{axis}

\end{tikzpicture}}} 
\hspace{1cm}
\subcaptionbox{DD decoder.\label{fig:Exp_DD_vsT}}
{\scalebox{0.75}{
\begin{tikzpicture}

\definecolor{color0}{rgb}{0.12156862745098,0.466666666666667,0.705882352941177}
\definecolor{color1}{rgb}{1,0.498039215686275,0.0549019607843137}
\definecolor{color2}{rgb}{0.172549019607843,0.627450980392157,0.172549019607843}
\definecolor{color3}{rgb}{0.83921568627451,0.152941176470588,0.156862745098039}
\definecolor{color4}{rgb}{0.580392156862745,0.403921568627451,0.741176470588235}
\definecolor{color5}{rgb}{0.549019607843137,0.337254901960784,0.294117647058824}

\begin{axis}[
width=10cm,
height=6.5cm,
legend cell align={left},
legend style={fill opacity=0.8, draw opacity=1, text opacity=1, draw=white!80!black},
tick align=outside,
tick pos=left,
x grid style={white!69.0196078431373!black},
xlabel={\large Number of tests used ($T$)},
xmajorgrids,
xmin=75, xmax=625,
xtick style={color=black},
y grid style={white!69.0196078431373!black},
ylabel={\Large FN rate},
ymajorgrids,
ymin=-0.044980739123692, ymax=1.0497260669424,
ytick style={color=black}
]
\addplot [thick, color0, mark=*, mark size=3, mark options={solid}]
table {%
600 0.00952843453340753
500 0.039624
400 0.174364526869304
300 0.58632
200 0.96550881389137
100 0.999966666666667
};
\addlegendentry{CCW}
\addplot [thick, color1, mark=square*, mark size=3, mark options={solid}]
table {%
600 0.034441134627845
500 0.094419
400 0.285099869858995
300 0.644660
200 0.965932522963441
100 0.999953846153846
};
\addlegendentry{CCA}
\addplot [thick, color2, mark=triangle*, mark size=3, mark options={solid,rotate=180}]
table {%
600 0.0416334855714924
500 0.087589
400 0.25109565204411
300 0.552445
200 0.882058512660812
100 0.993806312624346
};
\addlegendentry{GD + 0init}
\addplot [thick, color3, mark=triangle*, mark size=3, mark options={solid,rotate=270}]
table {%
600 0.00497833886751523
500 0.015283
400 0.0730354148282675
300 0.328395
200 0.86623169553908
100 0.99847060797205
};
\addlegendentry{GD + CCWinit}
\addplot [thick, color4, mark=asterisk, mark size=3, mark options={solid}]
table {%
600 0.00477866115203936
500 0.015478
400 0.0777223066853914
300 0.323053
200 0.863061971470356
100 0.997792711556906
};
\addlegendentry{GD + CCAinit}
\addplot [thick, color5, mark=star, mark size=3, mark options={solid}]
table {%
600 0.00685699193281507
500 0.020204
400 0.0880743657346545
300 0.355559
200 0.857688621982871
100 0.999365498993222
};
\addlegendentry{GD + sampling}
\end{axis}

\end{tikzpicture}}} 
\caption{Priors sampled from an exponential distribution with mean 0.05, $N = 1000$. We average over 10 such instances.\pavlos{it seems that the knowledge of the priors does not offers any benefits to CCA compared to CCW. Any ideas about why this happens? Could we perhaps try with a larger mean (e.g. $0.2$) or another distribution for the priors (other than exponential) just to see whether CCA performs better than CCW and whether GD can offer even more benefits?}}
\end{figure*}

\section{Algorithms}
\label{sec:algorithms}
Leveraging the approximate formulation in \eqref{eq:optimization_LB}, we here explore a GD approach to find good choices of $G$. Our proposed approach uses informed initialization with  information provided by traditional group test designs. Thus, it can be viewed as a way to refine and improve existing designs via local search. Moreover, we  propose a variation of GD that numerically seems to converge to good choices of $G$ in many situations even without informed initialization. 
\subsection{Baseline test designs}
We use the following two  group test design algorithms as baselines for comparison:

 \pavlos{should we perhaps omit CCW, which is agnostic of the priors, to avoid unfair comparison?}
$\bullet$ \textbf{Constant column weight (CCW) design} (see \cite{aldridge2016improved,ncc-Johnson}). This design was introduced in the context of group testing for identical priors\footnote{Most of these were proposed in the context of combinatorial priors. However, Theorem 1.7 and Theorem 1.8 from \cite{GroupTestingMonograph} imply that any algorithm that attains a vanishing probability of error on the combinatorial priors, also attains a vanishing probability of error on the corresponding i.i.d probabilistic priors.}, but we adapt it to be applicable for non-identical priors as well, in addition to identical priors.  Here we construct a randomized $G$ assuming that all individuals have the same prior probability of infection $p_{mean}$ (this assumption is trivially true if the priors are identical), where $p_{mean}$ is defined as the mean prior probability of infection $\frac{1}{N}\sum_{i=1}^N p_i$. The testing matrix $G$ is constructed column-by-column by placing each individual in a fixed number ($\frac{0.69T}{Np_{mean}}$) of tests, uniformly at random. 



 $\bullet$ \textbf{Coupon Collector Algorithm (CCA)} from \cite{prior}. The CCA algorithm was introduced in \cite{prior} for the case of non-identical, independent priors. In short, the CCA algorithm constructs a random non-adaptive test design $G$ by sampling each row independently from a distribution (we refer the reader to \cite{prior} for the exact description of this distribution). The idea is to place objects which are less likely to be infected in more number of tests and vice-versa. 

\subsection{Test designs based on gradient descent}
We are now ready to describe the gradient descent (GD) approaches to search for $G$. The high-level idea for our algorithms is as follows:

$\bullet$ We consider the approximate formulation in \eqref{eq:optimization_LB}. Pick an initial point $Q^{(0)}$.

$\bullet$ At each gradient iteration $l$, update $Q^{(l)} \leftarrow Q^{(l-1)}-\epsilon \nabla_Q f(Q)$, where $\epsilon$ is the step size. Project $Q^{(l)}$ onto $[0,1]^{T\times N}$ by resetting negative entries to 0 and entries greater than 1 to 1.

$\bullet$ Stop based on some stopping criteria (e.g. limit number of gradient steps or check for convergence).

$\bullet$ Let $Q^{*}$ be the resulting output. Sample a matrix $G^*$ where $G^*\sim Q^*$ and return it.

As it turns out, in our experiments, the choice of initialization plays a significant role in finding good choices of $G$. 
We propose the following initializations.


$\bullet$ \textbf{GD + CCW init.} We first sample a testing matrix according to the CCW testing matrix and set $Q^{(0)}$ as this matrix. The GD proceeds with this initialization.

$\bullet$ \textbf{GD + CCA init.} We first sample a testing matrix according to the CCA testing matrix and set $Q^{(0)}$ as this matrix. The GD proceeds with this initialization.

Notably, any other state-of-the-art test design could have been used as initialization. In principle, the above approach can be perceived as a way to refine existing test designs via local search. Alternatively, we also propose a  modification to the GD approach called \textbf{GD + sampling} that helps avoid getting stuck in a local minima by encouraging GD to explore multiple neighborhoods. The idea is use stochastic re-initialization of the solution state every few gradient iterations, while ensuring that the value of the objective function is approximately preserved. First note that the objective value $f(Q)$ is the mean of $f(G)$ with $G\sim Q$. Therefore, it is reasonable to expect that typical realizations of $G$ will be such that $f(G)$ is close to $f(Q)$. Given this idea, we propose the following: start from the all 0 initialization. However, every few gradient iterations, 
we replace the current solution state $Q^{(l)}$ by $G^s$ where $G^s$ is sampled from the distributed matrix $Q^{(l)}$, i.e., $G^s\sim Q^{(l)}$. This encourages GD to explore different neighborhoods while (approximately) preserving the monotonocity  of GD. 

\section{Numerical results}
\label{sec:experiments}

In this section, we show simulation results to demonstrate the improvement our GD based approaches provides. 

\textbf{Test designs compared:} We compare the testing matrices $G$ obtained via each of the following methods: CCW, CCA, GD + CCW init., GD + CCA init., GD + sampling. 
For completeness, we consider also the trivial all $0$-initialization for GD (which we call \textbf{GD + 0 init}), where the initial point $Q^{(0)}$ is set to all zeros.

\textbf{Set-up:} We first fix the prior probabilities of infection $(p_1,p_2,...,p_N)$ -- each $p_i$ is sampled from an exponential distribution with mean $0.05$; if $p_i>1$, we set it to 1. We repeat for 10 such prior distributions. For each design, we estimate $\Errors(G)$ via Monte-Carlo simulations. 


\textbf{Metrics:} We use the false positive (FP) rate (defined as the fraction of uninfected individuals incorrectly determined to be infected) to measure the performance w.r.t the DND decoder. Recall that the DND decoder results in 0 false negatives (FN) by construction.

\textbf{Transferability to other decoders:} As our GD methods aim for optimal designs with the DND decoder, a natural follow-up question is how they perform with other decoders. We compare the performance of each of the test designs w.r.t the Definite Defective (DD) decoder. One could also consider other decoders, such as ones based on belief propagation, but these result in both FP and FN, and consequently the comparison between different methods is not trivial; it requires weighing FP against FN, which can be application specific. 
 We refer the reader to Section 2.4 in \cite{GroupTestingMonograph} for a precise description of DD decoder.
Consequently, DD has 0 FP by construction. In this case, we use as performance measure the false negative (FN) rate. 

\balance

\textbf{Observations:} In \Cref{fig:Exp_DND_vsT}, we plot the FP rate for each test design w.r.t DND decoder, as a function of $T$. We observe that the GD based methods significantly outperform CCW and CCA\footnote{Interestingly, CCW outperforms CCA here, despite using less information about the priors. We refer the reader to \Cref{app:more_numerics} for cases where CCA outperforms CCW.}. 
Notably, the improvement of our enhanced GD with informed initialization or sampling seems inversely proportional to $\nTests$, which is of practical importance.

Next, we plot the FN rate of each test design w.r.t the DD decoder, as a function of $T$ in \Cref{fig:Exp_DD_vsT}. The performance trend here is similar to what was observed with the DND decoder, which further supports the usefulness of our GD based approach and its transferability to other decoders. 

\clearpage
\bibliographystyle{IEEEtran}
\bibliography{bibliography}
~\newpage

\onecolumn
\appendix
\subsection{Proof of Lemma~\ref{lemma:comb_rel}}
\label{app:lemma1proof}
\begin{proof}
We first note that for any $\q\in [0,1]^n$ we have, $$  \expectation_{\X \sim \q} g(\X) \geq \min_{\substack{\x\in \{0,1\}^n}}\ g(\x),$$ since the expectation of a random variable is at least as large as its minimum value over its support.
Since the above holds for any $\q$, as a result we have
\begin{equation}
\label{eq:temp1}
    \min_{\substack{\q\in [0,1]^n}}\  \expectation_{\X \sim \q} g(\X) \geq \min_{\substack{\x\in \{0,1\}^n}}\ g(\x).
\end{equation} 
Let $\x^*$ be a minimizer of $g(\x)$ in \eqref{eq:lemma1_1}.  The choice of $\q^*=\x^*$ (i.e. $\X=\x^*$ with probability 1) gives
\begin{equation}
\label{eq:temp2}
    f(\q^*)=\expectation_{\X \sim \q^*} g(\X) = g(\x^*) = \min_{\substack{\x\in \{0,1\}^n}}\ g(\x).
\end{equation}
From \eqref{eq:temp1} and \eqref{eq:temp2} we conclude that
\begin{equation*}
    \min_{\substack{\q\in [0,1]^n}}\  \expectation_{\X \sim \q} g(\X) = \min_{\substack{\x\in \{0,1\}^n}}\ g(\x).
\end{equation*} 
In order to obtain a solution to \eqref{eq:lemma1_1}, we obtain a solution $\q^*$ of \eqref{eq:lemma1_2} and any $\X\sim \q^*$ (sample $X_i \sim \Ber(q_i)$) is a solution to \eqref{eq:lemma1_1}.
\end{proof}

\subsection{\Cref{thm:Errors_LB} proof: filling in the gaps}
\label{app:FKG}
In the proof of \Cref{thm:Errors_LB} we claimed the following:
$$\expectation_{\U\setminus \{i\}} \prod_{t=1}^T \gamma_{t,i} \geq \prod_{t=1}^T \expectation_{\U\setminus \{i\}}  \gamma_{t,i}.$$
where $\gamma_{t,i} \triangleq \left(1-{G_{ti}}\prod_{\substack{j=1:\\j\neq i}}^N  (1-G_{tj} U_j)\right)$. We prove this using the Fortuin–Kasteleyn–Ginibre (FKG) inequality (see \cite{fortuin1971correlation, kemperman1977fkg, EncyMath} or proof of Lemma 4 in \cite{individual-optimal}), restated here for convenience. 
\begin{lemma}[FKG inequality]
Consider a finite distributive lattice $\Gamma$ with partial ordering $\prec$ and meet ($\wedge$) and join operators ($\vee$). Consider a probability measure $\mu$ on $\Gamma$ that is log-supermodular, i.e.,
$$\mu(a)\mu(b)\leq \mu(a \wedge b)\mu(a \vee b)\ \forall \ a,b \in \Gamma.$$
Then, any two  functions $f$ and $g$ which are non-decreasing on $\Gamma$ are positively correlated, i.e.,
$$\expectation_{\mu}(f g)\geq \expectation_{\mu}(f) \expectation_{\mu}(g).$$
\end{lemma}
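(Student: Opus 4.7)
The plan is to prove the FKG inequality by induction after first reducing to the Boolean cube. By Birkhoff's representation theorem, every finite distributive lattice embeds order-preservingly as a sublattice of some $\{0,1\}^n$ with coordinate-wise meet and join. Log-supermodularity of $\mu$, the partial order, and monotonicity of $f$ and $g$ all transfer under this embedding, so it suffices to prove the inequality on $\Gamma = \{0,1\}^n$ for a log-supermodular measure $\mu$ and non-decreasing $f, g$. The induction is on $n$.

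For the base case $n = 1$, the lattice is the two-point chain $\{0,1\}$. Monotonicity gives $(f(1)-f(0))(g(1)-g(0)) \geq 0$, and a direct expansion shows
\[
\expectation_\mu[fg] - \expectation_\mu[f]\,\expectation_\mu[g] \;=\; \mu(0)\mu(1)\,(f(1)-f(0))(g(1)-g(0)) \;\geq\; 0.
\]
For the inductive step, write $x = (y, z)$ with $y \in \{0,1\}^{n-1}$ and $z \in \{0,1\}$. Let $\nu(z) = \sum_y \mu(y,z)$ and $\mu_z(y) = \mu(y,z)/\nu(z)$ be the marginal on the last coordinate and the corresponding conditional; set $F(z) = \expectation_{\mu_z}[f]$ and $G(z) = \expectation_{\mu_z}[g]$. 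Two sub-claims are required: (i) each $\mu_z$ is log-supermodular on $\{0,1\}^{n-1}$, which is immediate from applying log-supermodularity of $\mu$ to pairs $(y_1, z)$ and $(y_2, z)$ that share the last coordinate; (ii) $F$ and $G$ are non-decreasing in $z$. Granting both, the inductive hypothesis on $\mu_z$ gives $\expectation_{\mu_z}[fg] \geq F(z)G(z)$, and the tower property together with the base case applied to $\nu$ on $\{0,1\}$ with monotone $F, G$ yields
\[
\expectation_\mu[fg] \;=\; \expectation_\nu\!\left[\,\expectation_{\mu_z}[fg]\,\right] \;\geq\; \expectation_\nu[FG] \;\geq\; \expectation_\nu[F]\,\expectation_\nu[G] \;=\; \expectation_\mu[f]\,\expectation_\mu[g].
\]

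The main obstacle is sub-claim (ii), which is a stochastic-dominance statement between $\mu_0$ and $\mu_1$ on $\{0,1\}^{n-1}$. Log-supermodularity of $\mu$ translates (up to normalization) into $\mu_0(y_1)\mu_1(y_2) \leq \mu_0(y_1 \wedge y_2)\mu_1(y_1 \vee y_2)$, which is precisely Holley's hypothesis for $\mu_1$ to stochastically dominate $\mu_0$. Proving Holley's inequality directly is essentially as hard as FKG, so the cleanest route is to package everything through the Ahlswede--Daykin four functions theorem: if $\alpha, \beta, \gamma, \delta : \{0,1\}^n \to \mathbb{R}_{\geq 0}$ satisfy $\alpha(x)\beta(y) \leq \gamma(x \vee y)\delta(x \wedge y)$ for all $x, y$, then the same inequality holds with $\alpha, \beta, \gamma, \delta$ replaced by their sums over arbitrary subsets of $\{0,1\}^n$. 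This meta-theorem admits a self-contained induction on $n$ via a one-coordinate splitting that avoids the circular Holley dependence.

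To derive FKG from Ahlswede--Daykin, assume without loss of generality that $f, g \geq 0$ (shifting by constants leaves the covariance unchanged) and set $\alpha = \mu f$, $\beta = \mu g$, $\gamma = \mu f g$, $\delta = \mu$. Monotonicity gives $f(x) \leq f(x \vee y)$ and $g(y) \leq g(x \vee y)$, log-supermodularity gives $\mu(x)\mu(y) \leq \mu(x \wedge y)\mu(x \vee y)$, and multiplying these three inequalities verifies the required pointwise hypothesis. Summing over $A = B = \{0,1\}^n$ and using $\sum_x \mu(x) = 1$ produces $\expectation_\mu[f]\,\expectation_\mu[g] \leq \expectation_\mu[fg]$, which is the desired inequality.
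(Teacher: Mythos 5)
The paper does not actually prove this lemma: it is quoted as a classical result, with the reader pointed to \cite{fortuin1971correlation, kemperman1977fkg, EncyMath} and to Lemma 4 of \cite{individual-optimal}, and is then applied as a black box (the only thing proved in \Cref{app:FKG} is that the specific functions $\gamma_{t,i}$ are non-negative and non-decreasing, so that the lemma applies). Your proposal, by contrast, supplies a genuine proof, and the argument is sound: the Birkhoff embedding into $\{0,1\}^n$ preserves log-supermodularity (pairs outside the sublattice image carry zero mass, and the image is closed under $\wedge$ and $\vee$), the shift to $f,g\geq 0$ is legitimate since the covariance is translation-invariant, and the verification that $\alpha=\mu f$, $\beta=\mu g$, $\gamma=\mu fg$, $\delta=\mu$ satisfy the Ahlswede--Daykin hypothesis by multiplying $f(x)\leq f(x\vee y)$, $g(y)\leq g(x\vee y)$, and $\mu(x)\mu(y)\leq\mu(x\wedge y)\mu(x\vee y)$ is exactly right. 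Two remarks. First, the inductive scaffolding in your second paragraph is ultimately discarded: you correctly diagnose that sub-claim (ii) is a Holley-type stochastic-dominance statement whose direct proof would be circular, and then route everything through Ahlswede--Daykin instead, at which point the conditioning decomposition, the base-case covariance identity, and sub-claim (i) play no role in the final argument; the proof could be stated in half the space as ``Birkhoff reduction, then four functions theorem.'' Second, the four functions theorem itself is asserted rather than proved; this is the one load-bearing external ingredient, and while its one-coordinate-splitting induction is standard and non-circular, a fully self-contained write-up would need to include it. Neither point is an error --- your proof is correct and strictly more informative than the paper's citation --- but the paper's choice to cite rather than prove is defensible given that the lemma is classical and only its application is novel here.
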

\textbf{Remark:} Consider $\Gamma = \{0,1\}^N$ with partial ordering $\prec$, where $a\prec b$ if every coordinate of $b$ is at least as large as $a$. When the meet and join operators coincide with logical AND and logical OR respectively, this is a distributive lattice. It can be verified that any product measure $\mu$ on $\Gamma$ is log-supermodular. As a result, any two functions $f$ and $g$ which are non-decreasing on $\Gamma$ are positively correlated, i.e., $\expectation_{\mu}(f g)\geq \expectation_{\mu}(f) \expectation_{\mu}(g).$ Consequently, given any $M$ non-negative, non-decreasing functions $f_1,f_2,...,f_M$ one could inductively apply FKG inequality to obtain
\begin{equation}
\label{eq:FKG}
    \expectation_{\mu}(\prod_{i=1}^M f_i) \geq \prod_{i=1}^M \expectation_{\mu} f_i.
\end{equation}
Given \eqref{eq:FKG} what remains to be shown is that each $\gamma_{t,i}(\U)$ is non-negative and non-decreasing as a function of $\U \in \{0,1\}^N$. To see that it is non-negative is straight-forward -- we have $G_{tj}U_j \geq 0$ and hence $(1-G_{tj}U_j) \leq 1$. Therefore, the product $\prod_{\substack{j=1:\\j\neq i}}^N  (1-G_{tj} U_j) \leq 1$ and the non-negativity follows. To see that $\gamma_{t,i}(\U)$ is non-decreasing, we first consider $\U \prec \U'$, i.e., $U_j\leq U_j'\ \forall\ j$. Then we have $(1-G_{tj}U_j) \geq (1-G_{tj}U'_j)\ \forall\ t,j$ and $\prod_{\substack{j=1:\\j\neq i}}^N  (1-G_{tj} U_j)  \geq \prod_{\substack{j=1:\\j\neq i}}^N  (1-G_{tj} U'_j)\  \forall\ t$. Thus, $\gamma_{t,i}(\U) \leq \gamma_{t,i}(\U')$ and $\gamma_{t,i}$ is non-decreasing. Applying \eqref{eq:FKG}, we have
$$\expectation_{\U\setminus \{i\}} \prod_{t=1}^T \gamma_{t,i} \geq \prod_{t=1}^T \expectation_{\U\setminus \{i\}}  \gamma_{t,i}.$$

\subsection{Quality of the lower bound approximation}
\label{app:LBplot}
\begin{figure}[H]
\centering
\captionsetup[subfigure]{margin=10pt}
{\scalebox{1.1}{
\begin{tikzpicture}

\definecolor{color0}{rgb}{0.12156862745098,0.466666666666667,0.705882352941177}

\begin{axis}[
width=4cm,height=4cm,
tick align=outside,
tick pos=left,
x grid style={white!69.0196078431373!black},
xmajorgrids,
xmin=-0.0495, xmax=1.0395,
xtick style={color=black},
y grid style={white!69.0196078431373!black},
ymajorgrids,
ymin=-0.0495, ymax=1.0395,
ytick style={color=black}
]
\addplot [draw=color0, fill=color0, mark=*, only marks, mark size=1pt]
table{%
x  y
0.00940360466840474 0.145
0.022758922251284 0.065
0.0318180022653606 0.125
0.0468460599736816 0.08
0.0300029891779719 0.11
0.0450566483413914 0.075
0.0535824317458092 0.12
0.0703866044022251 0.0966666666666667
0.0292369007093735 0.11
0.044300967193259 0.0683333333333333
0.0528713050945938 0.116666666666667
0.0696901908088692 0.09
0.051157387274751 0.106666666666667
0.0680117335810415 0.0966666666666667
0.0760088665602872 0.12
0.094699131784141 0.14
0.022758922251284 0.065
0.238682634249003 0.2925
0.0302222502268902 0.065
0.25166312176703 0.264583333333333
0.027068313766921 0.02
0.249956621171359 0.254583333333333
0.0350080608021291 0.0333333333333333
0.263589213182852 0.262916666666667
0.0257409225272873 0.035
0.249235340883721 0.262916666666667
0.0336899367378348 0.0416666666666667
0.2628927995895 0.283333333333333
0.030513067584224 0.0283333333333333
0.261214342361681 0.258333333333333
0.0389609892428808 0.04
0.275552448125638 0.3
0.0318180022653606 0.105
0.0302222502268902 0.05
0.248647964941119 0.34125
0.259639422390895 0.29625
0.0426829659761231 0.11
0.0420969296068914 0.0616666666666667
0.266253945781774 0.325
0.278645874181728 0.311666666666667
0.0413555747364854 0.105
0.040734251045674 0.0616666666666667
0.265556248664272 0.32875
0.277949460588374 0.312083333333333
0.0530980590814417 0.0966666666666667
0.053548847760806 0.0866666666666667
0.284268136339778 0.328333333333333
0.298137439623562 0.34125
0.0468460599736816 0.0783333333333333
0.25166312176703 0.279583333333333
0.259639422390895 0.28875
0.481610480585022 0.516666666666667
0.0407324047792212 0.0533333333333333
0.251923348134282 0.25375
0.259891140328395 0.278333333333333
0.488745515025329 0.500833333333333
0.038798016126506 0.05
0.250560669573068 0.26
0.258573016264102 0.276666666666667
0.488049101431978 0.501666666666667
0.0326739915108021 0.04
0.251025973849038 0.251666666666667
0.259023106829037 0.271666666666667
0.495614565711794 0.510833333333333
0.0300029891779719 0.11
0.027068313766921 0.025
0.0426829659761231 0.1
0.0407324047792212 0.05
0.246572768999784 0.32625
0.25626217228693 0.276666666666667
0.264584511042256 0.328333333333333
0.275567948985757 0.291666666666667
0.0382016382765097 0.075
0.0361173632519351 0.045
0.0516580404413136 0.0566666666666667
0.0506104171949087 0.0666666666666667
0.262219882568044 0.308333333333333
0.27319307816458 0.293333333333333
0.281190211143804 0.31125
0.293520551829683 0.31875
0.0450566483413914 0.0766666666666667
0.249956621171359 0.237916666666667
0.0420969296068914 0.055
0.251923348134282 0.249583333333333
0.25626217228693 0.27875
0.47985233413324 0.513333333333333
0.258177739960095 0.275
0.487119333542136 0.5
0.0355456531604374 0.0516666666666667
0.247308306607008 0.256666666666667
0.0325984976983405 0.0433333333333333
0.249452068110802 0.248333333333333
0.253682746742196 0.26
0.484744462720972 0.501666666666667
0.255770743862819 0.267083333333333
0.492362202745577 0.489166666666667
0.0535824317458092 0.128333333333333
0.0350080608021291 0.0333333333333333
0.266253945781774 0.31625
0.259891140328395 0.278333333333333
0.264584511042256 0.325833333333333
0.258177739960095 0.27
0.488424982233741 0.540833333333333
0.494867394889128 0.52
0.0510417758528928 0.09
0.0325539432014152 0.0466666666666667
0.270816775990476 0.281666666666667
0.264948190803986 0.290416666666667
0.26917886943463 0.295
0.263269733576005 0.282916666666667
0.500489657047163 0.534166666666667
0.507858325438761 0.546666666666667
0.0703866044022251 0.103333333333333
0.263589213182852 0.262916666666667
0.278645874181728 0.31
0.488745515025329 0.5025
0.275567948985757 0.291666666666667
0.487119333542136 0.508333333333333
0.494867394889128 0.516666666666667
0.724090185380288 0.747083333333333
0.0493724421355546 0.0733333333333333
0.248785803142942 0.261666666666667
0.264326480333052 0.29375
0.481180050745513 0.493333333333333
0.261149611179296 0.28
0.479501593517532 0.488333333333333
0.48749872649753 0.508333333333333
0.724090185380288 0.74875
0.0292369007093735 0.125
0.0257409225272873 0.035
0.0413555747364854 0.075
0.038798016126506 0.055
0.0382016382765097 0.09
0.0355456531604374 0.0383333333333333
0.0510417758528928 0.055
0.0493724421355546 0.0733333333333333
0.245709252677328 0.325
0.254847168377296 0.28125
0.263193435545151 0.295
0.273581942957198 0.29875
0.261526546222036 0.313333333333333
0.271903485729376 0.291666666666667
0.279900618708597 0.316666666666667
0.291586163176909 0.35125
0.044300967193259 0.0783333333333333
0.249235340883721 0.262916666666667
0.040734251045674 0.0566666666666667
0.250560669573068 0.257083333333333
0.0361173632519351 0.0466666666666667
0.247308306607008 0.2525
0.0325539432014152 0.0466666666666667
0.248785803142942 0.261666666666667
0.254847168377296 0.271666666666667
0.479115704026971 0.499166666666667
0.256141733552095 0.265
0.485741580668111 0.501666666666667
0.252964864398489 0.25875
0.484063123440297 0.491666666666667
0.254408065301539 0.27125
0.490999524184296 0.5025
0.0528713050945938 0.128333333333333
0.0336899367378348 0.0416666666666667
0.265556248664272 0.32125
0.258573016264102 0.275833333333333
0.0516580404413136 0.0916666666666667
0.0325984976983405 0.04
0.270816775990476 0.311666666666667
0.264326480333052 0.293333333333333
0.263193435545151 0.323333333333333
0.256141733552095 0.271666666666667
0.487691354752468 0.544166666666667
0.493511919263565 0.515833333333333
0.268505541587844 0.302916666666667
0.26195160951165 0.28625
0.49983059501495 0.538333333333333
0.506540201374406 0.546666666666667
0.0696901908088692 0.103333333333333
0.2628927995895 0.272916666666667
0.277949460588374 0.31
0.488049101431978 0.5075
0.0506104171949087 0.0733333333333333
0.249452068110802 0.248333333333333
0.264948190803986 0.295416666666667
0.481180050745513 0.495
0.273581942957198 0.3
0.485741580668111 0.493333333333333
0.493511919263565 0.513333333333333
0.723393771786867 0.745
0.260453197585875 0.281666666666667
0.478805179924111 0.493333333333333
0.48680231290411 0.504166666666667
0.723393771786867 0.748333333333333
0.051157387274751 0.101666666666667
0.030513067584224 0.0333333333333333
0.0530980590814417 0.0916666666666667
0.0326739915108021 0.045
0.262219882568044 0.307916666666667
0.253682746742196 0.261666666666667
0.26917886943463 0.302083333333333
0.261149611179296 0.268333333333333
0.261526546222036 0.311666666666667
0.252964864398489 0.26
0.268505541587844 0.306666666666667
0.260453197585875 0.266666666666667
0.485938295993824 0.536666666666667
0.49024502745879 0.509166666666667
0.498242160437988 0.511666666666667
0.50336333222065 0.535833333333333
0.0680117335810415 0.1
0.261214342361681 0.259583333333333
0.053548847760806 0.0833333333333333
0.251025973849038 0.252916666666667
0.27319307816458 0.289166666666667
0.484744462720972 0.501666666666667
0.263269733576005 0.288333333333333
0.479501593517532 0.489166666666667
0.271903485729376 0.281666666666667
0.484063123440297 0.496666666666667
0.26195160951165 0.279166666666667
0.478805179924111 0.486666666666667
0.49024502745879 0.506666666666667
0.721715314558886 0.741666666666667
0.485123855676128 0.501666666666667
0.721715314558886 0.742083333333333
0.0760088665602872 0.118333333333333
0.0389609892428808 0.05
0.284268136339778 0.328333333333333
0.259023106829037 0.275
0.281190211143804 0.305
0.255770743862819 0.26125
0.500489657047163 0.518333333333333
0.48749872649753 0.513333333333333
0.279900618708597 0.318333333333333
0.254408065301539 0.266666666666667
0.49983059501495 0.5375
0.48680231290411 0.515833333333333
0.498242160437988 0.530833333333333
0.485123855676128 0.506666666666667
0.729712447538884 0.76
0.729712447538884 0.765
0.094699131784141 0.14
0.275552448125638 0.275
0.298137439623562 0.34125
0.495614565711794 0.52
0.293520551829683 0.33375
0.492362202745577 0.5025
0.507858325438761 0.535
0.724090185380288 0.748333333333333
0.291586163176909 0.33
0.490999524184296 0.5
0.506540201374406 0.543333333333333
0.723393771786867 0.75
0.50336333222065 0.535833333333333
0.721715314558886 0.742083333333333
0.729712447538884 0.764166666666667
0.966303906421642 1
};
\addplot [semithick, red]
table {%
0 0
0.01 0.01
0.02 0.02
0.03 0.03
0.04 0.04
0.05 0.05
0.06 0.06
0.07 0.07
0.08 0.08
0.09 0.09
0.1 0.1
0.11 0.11
0.12 0.12
0.13 0.13
0.14 0.14
0.15 0.15
0.16 0.16
0.17 0.17
0.18 0.18
0.19 0.19
0.2 0.2
0.21 0.21
0.22 0.22
0.23 0.23
0.24 0.24
0.25 0.25
0.26 0.26
0.27 0.27
0.28 0.28
0.29 0.29
0.3 0.3
0.31 0.31
0.32 0.32
0.33 0.33
0.34 0.34
0.35 0.35
0.36 0.36
0.37 0.37
0.38 0.38
0.39 0.39
0.4 0.4
0.41 0.41
0.42 0.42
0.43 0.43
0.44 0.44
0.45 0.45
0.46 0.46
0.47 0.47
0.48 0.48
0.49 0.49
0.5 0.5
0.51 0.51
0.52 0.52
0.53 0.53
0.54 0.54
0.55 0.55
0.56 0.56
0.57 0.57
0.58 0.58
0.59 0.59
0.6 0.6
0.61 0.61
0.62 0.62
0.63 0.63
0.64 0.64
0.65 0.65
0.66 0.66
0.67 0.67
0.68 0.68
0.69 0.69
0.7 0.7
0.71 0.71
0.72 0.72
0.73 0.73
0.74 0.74
0.75 0.75
0.76 0.76
0.77 0.77
0.78 0.78
0.79 0.79
0.8 0.8
0.81 0.81
0.82 0.82
0.83 0.83
0.84 0.84
0.85 0.85
0.86 0.86
0.87 0.87
0.88 0.88
0.89 0.89
0.9 0.9
0.91 0.91
0.92 0.92
0.93 0.93
0.94 0.94
0.95 0.95
0.96 0.96
0.97 0.97
0.98 0.98
0.99 0.99
};
\end{axis}

\end{tikzpicture}}}
{\scalebox{1.1}{
\begin{tikzpicture}

\definecolor{color0}{rgb}{0.12156862745098,0.466666666666667,0.705882352941177}

\begin{axis}[
width=4cm,height=4cm,
tick align=outside,
tick pos=left,
x grid style={white!69.0196078431373!black},
xmajorgrids,
xmin=-0.0495, xmax=1.0395,
xtick style={color=black},
y grid style={white!69.0196078431373!black},
ymajorgrids,
ymin=-0.0495, ymax=1.0395,
ytick style={color=black}
]
\addplot [draw=color0, fill=color0, mark=*, only marks, mark size=1pt]
table{%
x  y
0.187519342488676 0.195283412520356
0.220128224988407 0.236542615166437
0.185106052248684 0.199796094436736
0.215802236260861 0.224235085813872
0.183950888797473 0.19537419178763
0.202849265298066 0.218865974690085
0.220007761490582 0.232903360688742
0.231667776910779 0.245253416146109
0.21049381516313 0.222229130922049
0.195242659871637 0.208113712589795
0.0509852406723231 0.0561032029071171
0.0570205547572635 0.0632263419600316
0.0580183483569487 0.0678834436747355
0.0562818541064428 0.0578317805898033
0.0437988193304989 0.0503769341778556
0.0519127281722745 0.059836026249193
0.0515755973111705 0.0641684495498343
0.047637547933124 0.0516780085970944
0.0466390786097867 0.0490454580624889
0.0534763920109326 0.0582810786859558
0.0542080573898443 0.0633314212295119
0.0581664227919683 0.0847172492164885
0.057172874768261 0.0742643903535773
0.04895930241268 0.0562062146811567
0.051103006560781 0.0658852570619999
0.0539739854021945 0.0705070090862254
0.0525396888913788 0.0611262275991065
0.0555568792042214 0.0736093348361007
0.0527887993622847 0.058660815460966
0.0603333065607086 0.0833722339902592
0.0857760867785385 0.11587962172186
0.0865907352547274 0.1131052237249
0.0836560578842952 0.0893318081220989
0.0913866176244272 0.114679858835478
0.0826292489059663 0.117376172346695
0.0932483003155988 0.118983782657988
0.0906403725278365 0.12460019891009
0.0942648438519113 0.123730401373111
0.090108009376245 0.118547550250003
0.08794985468857 0.116084360909944
0.157302893497537 0.235100621317839
0.154345229241273 0.212218275398642
0.154404915979853 0.180399778569771
0.157334809571273 0.214254621320495
0.165611661252363 0.234836873718572
0.142568275908639 0.200041406616281
0.158228462517538 0.196343613887346
0.149597488765365 0.182329793737089
0.171115522180206 0.231680812588248
0.150939393467269 0.201303033583784
0.26691444047784 0.385995562011865
0.252574340726957 0.319964577679249
0.281371391376628 0.359433825318449
0.25619482474775 0.346715126584518
0.265430732464842 0.344412088602292
0.272958603996752 0.337318004238588
0.268456747685101 0.378066362927972
0.252057688820768 0.350322006939811
0.257618290892042 0.294721713553706
0.244480531616157 0.311303137537617
0.382366823456179 0.458893302579464
0.371634195312939 0.488132914528445
0.377986841377625 0.444350797568874
0.386334669527798 0.464601518948194
0.387907061402344 0.559263096872772
0.393809211984536 0.478705225748781
0.394018774816338 0.468328463187876
0.399593895518362 0.509298906018738
0.382108801354892 0.503852117927054
0.40716656244325 0.528475725196556
0.49531011339312 0.600400769588034
0.518509941050037 0.64368403793737
0.511461572684872 0.553573163726332
0.526307028843798 0.588837139451489
0.504264982584679 0.645784994800384
0.514328826929894 0.578437605909439
0.529136070882156 0.626863782323735
0.528496101372669 0.57784244431665
0.507049508619935 0.650647151716252
0.540751178698525 0.630303081972747
0.642073286935396 0.6761464442734
0.629739874210195 0.714998457287274
0.636983703524645 0.710882645624106
0.628774241827393 0.778080833545663
0.633613837328006 0.71288750221423
0.631871650480452 0.712664552307531
0.627061620406847 0.708039703328368
0.64593419844252 0.70971972657819
0.641570104072411 0.695545074633878
0.634581076634946 0.683242722981084
0.733454502011459 0.819734301450956
0.738089217978665 0.756604667885899
0.731814176788101 0.806682193909894
0.72993999534311 0.764068636406743
0.72403383242361 0.768406557047346
0.722392984970276 0.811245447821259
0.724981020411726 0.813402954532507
0.731232147808149 0.765070617979449
0.722144989887752 0.759282650933588
0.722708909964186 0.801419787163629
};
\addplot [thick, red]
table {%
0 0
0.01 0.01
0.02 0.02
0.03 0.03
0.04 0.04
0.05 0.05
0.06 0.06
0.07 0.07
0.08 0.08
0.09 0.09
0.1 0.1
0.11 0.11
0.12 0.12
0.13 0.13
0.14 0.14
0.15 0.15
0.16 0.16
0.17 0.17
0.18 0.18
0.19 0.19
0.2 0.2
0.21 0.21
0.22 0.22
0.23 0.23
0.24 0.24
0.25 0.25
0.26 0.26
0.27 0.27
0.28 0.28
0.29 0.29
0.3 0.3
0.31 0.31
0.32 0.32
0.33 0.33
0.34 0.34
0.35 0.35
0.36 0.36
0.37 0.37
0.38 0.38
0.39 0.39
0.4 0.4
0.41 0.41
0.42 0.42
0.43 0.43
0.44 0.44
0.45 0.45
0.46 0.46
0.47 0.47
0.48 0.48
0.49 0.49
0.5 0.5
0.51 0.51
0.52 0.52
0.53 0.53
0.54 0.54
0.55 0.55
0.56 0.56
0.57 0.57
0.58 0.58
0.59 0.59
0.6 0.6
0.61 0.61
0.62 0.62
0.63 0.63
0.64 0.64
0.65 0.65
0.66 0.66
0.67 0.67
0.68 0.68
0.69 0.69
0.7 0.7
0.71 0.71
0.72 0.72
0.73 0.73
0.74 0.74
0.75 0.75
0.76 0.76
0.77 0.77
0.78 0.78
0.79 0.79
0.8 0.8
0.81 0.81
0.82 0.82
0.83 0.83
0.84 0.84
0.85 0.85
0.86 0.86
0.87 0.87
0.88 0.88
0.89 0.89
0.9 0.9
0.91 0.91
0.92 0.92
0.93 0.93
0.94 0.94
0.95 0.95
0.96 0.96
0.97 0.97
0.98 0.98
0.99 0.99
};
\end{axis}

\end{tikzpicture}}} 
\caption{Scatter plot of $\Errors(G)$ (on y-axis) vs. $\Errors_{LB}(G)$ (on x-axis) normalized by the blocklength $N$.
$\Errors(G)$ is estimated via Monte-Carlo simulations while $\Errors_{LB}(G)$ is computed exactly. For a fixed prior distribution, we pick a variety of $G$ matrices and plot the two metrics --  the left figure plots for every $G\in\{0,1\}^{2\times 4}$ while the right figure plots for 1000 choices of $G$ sampled from $\{0,1\}^{300\times 500}$.}\label{fig:true_vs_LB}
\end{figure}
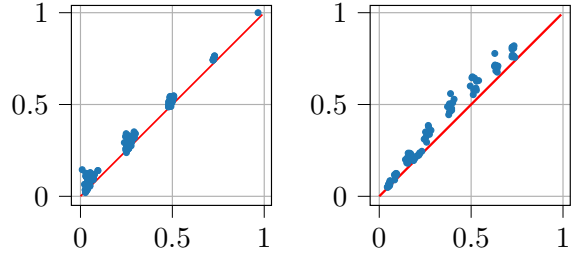

\subsection{Computing the objective function $f(Q)$}
\label{app:obj_comp}
Here we give a $O(N^2)$ algorithm to compute the objective function $f(Q)$ in \eqref{eq:optimization_LB}. We assume $T\leq N$ so $T=O(N)$ throughout.
We first restate the  expression for $f(Q)$ in \eqref{eq:obj_function}:
$$f(Q)=\sum_{i=1}^N (1-p_i)\prod_{t=1}^T (1-Q_{ti}\prod_{j=1,j\neq i}^N (1-Q_{tj}p_j)).$$
Note that this can be rewritten as:
$$f(Q)=\sum_{i=1}^N (1-p_i)F[i],$$ where the intermediate terms are defined as
$$F[i] \triangleq \prod_{t=1}^T (1-Q_{ti}G[t,i])$$ and
$$G[t,i] \triangleq \prod_{j=1,j\neq i}^N (1-Q_{tj}p_j)).$$ Thus, we first compute and store $G[t,i]\ \forall\ t,i$, which is then used to compute $F[i]\ \forall\ i$ in $O(N^2)$ time (assuming $T=O(N)$). Subsequently, $f(Q)$ can be computed from $F[i]$ in $O(N)$. Computing $G[t,i]$ takes $O(N^2)$ as one can first compute $H[t]\triangleq \prod_{j=1}^N (1-Q_{tj}p_j))\ \forall\ t$ in $O(N^2)$ time and obtain  $G[t,i]=H[t]/(1-Q_{ti}p_i)$ in $O(N^2)$. The overall time complexity of computing $f(Q)$ is $O(N^2)$.

\subsection{Expression for each partial derviative in $\nabla_Q f(Q)$}
Here, we give an expression for the gradient $\nabla f(Q)$ by calculating each partial derivative $\pdv{f(Q)}{Q_{lm}}$.
\label{app:grad_deriv}
\begin{align*}
    \pdv{f(Q)}{Q_{lm}} &=  \pdv{}{Q_{lm}}  \sum_{i=1}^N (1-p_i)  \prod_{t=1}^T   \left(1-{Q_{ti}}\prod_{\substack{j=1:\\j\neq i}}^N  (1-Q_{tj} p_j)\right) \\
     &=  \pdv{}{Q_{lm}}  \sum_{i=1:i\neq m}^N (1-p_i)  \prod_{t=1}^T   \left(1-{Q_{ti}}\prod_{\substack{j=1:\\j\neq i}}^N  (1-Q_{tj} p_j)\right) \\
     &\hspace{1cm}+ (1-p_m)  \pdv{}{Q_{lm}}   \prod_{t=1}^T   \left(1-{Q_{tm}}\prod_{\substack{j=1:\\j\neq m}}^N  (1-Q_{tj} p_j)\right) \\
     &\overset{(a)}{=}   \sum_{i=1:i\neq m}^N (1-p_i)  \pdv{}{Q_{lm}} \left(1-{Q_{li}}\prod_{\substack{j=1:\\j\neq i}}^N  (1-Q_{lj} p_j)\right)  \prod_{t=1:t\neq l}^T   \left(1-{Q_{ti}}\prod_{\substack{j=1:\\j\neq i}}^N  (1-Q_{tj} p_j)\right) \\
     &\hspace{1cm}+ (1-p_m)  \pdv{}{Q_{lm}} \left(1-{Q_{lm}}\prod_{\substack{j=1:\\j\neq m}}^N  (1-Q_{lj} p_j)\right)   \prod_{t=1:t\neq l}^T   \left(1-{Q_{tm}}\prod_{\substack{j=1:\\j\neq m}}^N  (1-Q_{tj} p_j)\right)\\
     &\overset{(b)}{=}   \sum_{i=1:i\neq m}^N (1-p_i)   \left({Q_{li}} p_m\prod_{\substack{j=1:\\j\neq i,j\neq m}}^N  (1-Q_{lj} p_j)\right)  \prod_{t=1:t\neq l}^T   \left(1-{Q_{ti}}\prod_{\substack{j=1:\\j\neq i}}^N  (1-Q_{tj} p_j)\right) \\
     &\hspace{1cm}+ (1-p_m)   \left(- \prod_{\substack{j=1:\\j\neq m}}^N  (1-Q_{lj} p_j)\right)   \prod_{t=1:t\neq l}^T   \left(1-{Q_{tm}}\prod_{\substack{j=1:\\j\neq m}}^N  (1-Q_{tj} p_j)\right) \numberthis \label{eq:gradient},\\
\end{align*}
where in $(a)$ we separate out the term corresponding to $t=l$ from the product term $\prod_{t=1}^T$ and apply the derivative in $(b)$.

\subsection{Computing $\nabla_{Q} f(Q)$}
The computation of gradient follows a similar approach as the computation of the objective function $f(Q)$. We assume $T\leq N$ so $T=O(N)$ throughout. We first restate the expression for the gradient in \eqref{eq:gradient}:
\label{app:grad_comp}
\begin{align*}
 \nabla_{Q_{lm}} f(Q)=\sum_{i=1:i\neq m}^N &(1-p_i)   \left({Q_{li}} p_m\prod_{\substack{j=1:\\j\neq i,j\neq m}}^N  (1-Q_{lj} p_j)\right)  \prod_{t=1:t\neq l}^T   \left(1-{Q_{ti}}\prod_{\substack{j=1:\\j\neq i}}^N  (1-Q_{tj} p_j)\right) \\ & + (1-p_m)   \left(- \prod_{\substack{j=1:\\j\neq m}}^N  (1-Q_{lj} p_j)\right)   \prod_{t=1:t\neq l}^T   \left(1-{Q_{tm}}\prod_{\substack{j=1:\\j\neq m}}^N  (1-Q_{tj} p_j)\right)\\
\end{align*}

As we did in the case of  objective function computation, we first simplify and rewrite this in terms of intermediate terms:
\begin{align*}
 \nabla_{Q_{lm}} f(Q)&=\sum_{i=1:i\neq m}^N (1-p_i)   \left({Q_{li}}  \frac{p_m}{1-Q_{lm}p_m} G[l,i] \right)  \prod_{t=1:t\neq l}^T   \left(1-{Q_{ti}}G[t,i]\right) \\ & \hspace{1cm} + (1-p_m)   \left(- G[l,m]\right)   \prod_{t=1:t\neq l}^T   \left(1-{Q_{tm}}G[t,m]\right)\\
 & =\frac{p_m}{1-Q_{lm}p_m}  \sum_{i=1:i\neq m}^N (1-p_i)   \left({Q_{li}}  G[l,i] \right) F[l,i] \\ & \hspace{1cm}+ (1-p_m)   \left(- G[l,m]\right)   F[l,m] \\
  & =\frac{p_m}{1-Q_{lm}p_m}  \left( \sum_{i=1}^N (1-p_i)  {Q_{li}}  G[l,i]  F[l,i] - (1-p_m) Q_{lm}G[l,m] F[l,m] \right) \\ & \hspace{1cm}+ (1-p_m)   \left(- G[l,m]\right)   F[l,m] \\
  & =\frac{p_m}{1-Q_{lm}p_m}   \sum_{i=1}^N (1-p_i)  {Q_{li}}  G[l,i]  F[l,i]  \\ & \hspace{1cm}- (1-p_m)   G[l,m]   F[l,m] \left(\frac{1}{1-Q_{lm}p_m}\right),
\end{align*}
where the intermediate terms are $$F[l,i] \triangleq \prod_{t=1:t\neq l}^T (1-Q_{ti}G[t,i])$$ and
$$G[t,i]=\prod_{j=1,j\neq i}^N (1-Q_{tj}p_j).$$
As we showed earlier, computing $G[t,i]\ \forall\ t,i$ can be done in $O(N^2)$ runtime complexity, and $F[l,i]$ can be obtained as $\frac{H[i]}{1-Q_{li}G[l,i]}$ where $H[i]\triangleq \prod_{t=1}^T (1-Q_{ti}G[t,i])$. Clearly, $H[i]\ \forall\ i$ can be obtained once in $O(N^2)$ and reused to compute $F[l,i]\ \forall\ l,i$ in $O(N^2)$. Having computed $F$ and $G$ terms, one could again use a similar trick to precompute $J[l]\triangleq \sum_{i=1}(1-p_i)Q_{li}G[l,i]F[l,i]\ \forall\ l$ in $O(N^2)$. With this, one could now compute each gradient term $\nabla_{Q_{lm}}$ in $O(1)$ thus giving an overall time complexity $O(N^2)$.

\subsection{Additional numerical results}
\label{app:more_numerics}

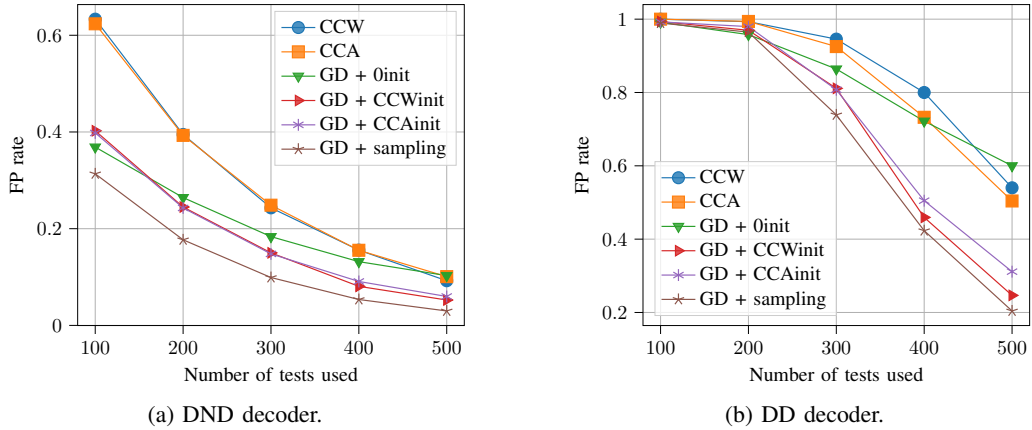
\begin{figure*}[!h]
\centering
\captionsetup[subfigure]{margin=10pt}
\subcaptionbox{DND decoder. \label{fig:bimodal_DND}}
{\scalebox{0.75}{
\begin{tikzpicture}

\definecolor{color0}{rgb}{0.12156862745098,0.466666666666667,0.705882352941177}
\definecolor{color1}{rgb}{1,0.498039215686275,0.0549019607843137}
\definecolor{color2}{rgb}{0.172549019607843,0.627450980392157,0.172549019607843}
\definecolor{color3}{rgb}{0.83921568627451,0.152941176470588,0.156862745098039}
\definecolor{color4}{rgb}{0.580392156862745,0.403921568627451,0.741176470588235}
\definecolor{color5}{rgb}{0.549019607843137,0.337254901960784,0.294117647058824}

\begin{axis}[
legend cell align={left},
legend style={fill opacity=0.8, draw opacity=1, text opacity=1, draw=white!80!black},
tick align=outside,
tick pos=left,
x grid style={white!69.0196078431373!black},
xlabel={Number of tests used},
xmajorgrids,
xmin=80, xmax=520,
xtick style={color=black},
y grid style={white!69.0196078431373!black},
ylabel={FP rate},
ymajorgrids,
ymin=-0.000286629800329015, ymax=0.663456686716692,
ytick style={color=black}
]
\addplot [semithick, color0, mark=*, mark size=3, mark options={solid}]
table {%
100 0.633286535965918
200 0.394837473104522
300 0.243116145791675
400 0.155692807641852
500 0.0923809329932331
};
\addlegendentry{CCW}
\addplot [semithick, color1, mark=square*, mark size=3, mark options={solid}]
table {%
100 0.623676722108046
200 0.392983077450143
300 0.248601879112324
400 0.155372174284839
500 0.100868621410779
};
\addlegendentry{CCA}
\addplot [semithick, color2, mark=triangle*, mark size=3, mark options={solid,rotate=180}]
table {%
100 0.368811846260235
200 0.264356849885208
300 0.183592797018168
400 0.131841978339869
500 0.102928671110067
};
\addlegendentry{GD + 0init}
\addplot [semithick, color3, mark=triangle*, mark size=3, mark options={solid,rotate=270}]
table {%
100 0.402478515375677
200 0.24482084593259
300 0.150043561180566
400 0.0808011251537619
500 0.052505695306531
};
\addlegendentry{GD + CCWinit}
\addplot [semithick, color4, mark=asterisk, mark size=3, mark options={solid}]
table {%
100 0.398031823651381
200 0.242907487607431
300 0.147915514812676
400 0.0908875644214338
500 0.0598246789957052
};
\addlegendentry{GD + CCAinit}
\addplot [semithick, color5, mark=star, mark size=3, mark options={solid}]
table {%
100 0.313461273317823
200 0.177007812618927
300 0.0990172172887944
400 0.0535647914743664
500 0.0298835209504447
};
\addlegendentry{GD + sampling}
\end{axis}

\end{tikzpicture}}} 
\hspace{1cm}
\subcaptionbox{DD decoder.\label{fig:bimodal_DD}}
{\scalebox{0.75}{
\begin{tikzpicture}

\definecolor{color0}{rgb}{0.12156862745098,0.466666666666667,0.705882352941177}
\definecolor{color1}{rgb}{1,0.498039215686275,0.0549019607843137}
\definecolor{color2}{rgb}{0.172549019607843,0.627450980392157,0.172549019607843}
\definecolor{color3}{rgb}{0.83921568627451,0.152941176470588,0.156862745098039}
\definecolor{color4}{rgb}{0.580392156862745,0.403921568627451,0.741176470588235}
\definecolor{color5}{rgb}{0.549019607843137,0.337254901960784,0.294117647058824}

\begin{axis}[
legend cell align={left},
legend style={
  fill opacity=0.8,
  draw opacity=1,
  text opacity=1,
  at={(0.03,0.03)},
  anchor=south west,
  draw=white!80!black
},
tick align=outside,
tick pos=left,
x grid style={white!69.0196078431373!black},
xlabel={Number of tests used},
xmajorgrids,
xmin=80, xmax=520,
xtick style={color=black},
y grid style={white!69.0196078431373!black},
ylabel={FP rate},
ymajorgrids,
ymin=0.164346441019337, ymax=1.03976528663324,
ytick style={color=black}
]
\addplot [semithick, color0, mark=*, mark size=3, mark options={solid}]
table {%
100 0.999845189223009
200 0.993140170683699
300 0.945565787471404
400 0.799924025594407
500 0.540023455737312
};
\addlegendentry{CCW}
\addplot [semithick, color1, mark=square*, mark size=3, mark options={solid}]
table {%
100 0.999973520923521
200 0.993777470395969
300 0.925222052934282
400 0.732361817853833
500 0.50411862864799
};
\addlegendentry{CCA}
\addplot [semithick, color2, mark=triangle*, mark size=3, mark options={solid,rotate=180}]
table {%
100 0.991692463687523
200 0.957869809150065
300 0.864065435052999
400 0.721880777721498
500 0.600117514060715
};
\addlegendentry{GD + 0init}
\addplot [semithick, color3, mark=triangle*, mark size=3, mark options={solid,rotate=270}]
table {%
100 0.993278769478577
200 0.969683325041294
300 0.811301798498403
400 0.459322770228703
500 0.246634396665783
};
\addlegendentry{GD + CCWinit}
\addplot [semithick, color4, mark=asterisk, mark size=3, mark options={solid}]
table {%
100 0.992594681356315
200 0.97971312330049
300 0.807511706971535
400 0.504897027261392
500 0.31136368620066
};
\addlegendentry{GD + CCAinit}
\addplot [semithick, color5, mark=star, mark size=3, mark options={solid}]
table {%
100 0.989359024848558
200 0.964626659099461
300 0.738898787348657
400 0.423114643613422
500 0.20413820672906
};
\addlegendentry{GD + sampling}
\end{axis}

\end{tikzpicture}}} 
\caption{Priors sampled from a discrete bimodal distribution (priors take value 0.02 or 0.3) with mean 0.1, $N = 1000$. We average over 10 such instances.}
\end{figure*}

\begin{figure*}[!h]
\centering
\captionsetup[subfigure]{margin=10pt}
\subcaptionbox{DND decoder. \label{fig:bimodal2_DND}}
{\scalebox{0.75}{
\begin{tikzpicture}

\definecolor{color0}{rgb}{0.12156862745098,0.466666666666667,0.705882352941177}
\definecolor{color1}{rgb}{1,0.498039215686275,0.0549019607843137}
\definecolor{color2}{rgb}{0.172549019607843,0.627450980392157,0.172549019607843}
\definecolor{color3}{rgb}{0.83921568627451,0.152941176470588,0.156862745098039}
\definecolor{color4}{rgb}{0.580392156862745,0.403921568627451,0.741176470588235}
\definecolor{color5}{rgb}{0.549019607843137,0.337254901960784,0.294117647058824}

\begin{axis}[
legend cell align={left},
legend style={fill opacity=0.8, draw opacity=1, text opacity=1, draw=white!80!black},
tick align=outside,
tick pos=left,
x grid style={white!69.0196078431373!black},
xlabel={Number of tests used},
xmajorgrids,
xmin=80, xmax=520,
xtick style={color=black},
y grid style={white!69.0196078431373!black},
ylabel={FP rate},
ymajorgrids,
ymin=-0.0187430975516467, ymax=0.674453593083045,
ytick style={color=black}
]
\addplot [semithick, color0, mark=*, mark size=3, mark options={solid}]
table {%
100 0.64294465259965
200 0.401261909016943
300 0.267725861191566
400 0.161808981971087
500 0.0994133701365078
};
\addlegendentry{CCW}
\addplot [semithick, color1, mark=square*, mark size=3, mark options={solid}]
table {%
100 0.551624918126129
200 0.310082927430413
300 0.174562359302406
400 0.100003850031668
500 0.0606016696079729
};
\addlegendentry{CCA}
\addplot [semithick, color2, mark=triangle*, mark size=3, mark options={solid,rotate=180}]
table {%
100 0.31188462670896
200 0.172565098679851
300 0.110622911287922
400 0.0833743519019742
500 0.0587567579351981
};
\addlegendentry{GD + 0init}
\addplot [semithick, color3, mark=triangle*, mark size=3, mark options={solid,rotate=270}]
table {%
100 0.334809800021953
200 0.186116914986941
300 0.112053924880043
400 0.04332570246963
500 0.0258294728306758
};
\addlegendentry{GD + CCWinit}
\addplot [semithick, color4, mark=asterisk, mark size=3, mark options={solid}]
table {%
100 0.328042380376828
200 0.18361271951402
300 0.110500658932846
400 0.0517262925752582
500 0.0316850279922418
};
\addlegendentry{GD + CCAinit}
\addplot [semithick, color5, mark=star, mark size=3, mark options={solid}]
table {%
100 0.200133647427083
200 0.111650923836881
300 0.0510474448147422
400 0.0258219595214467
500 0.0127658429317484
};
\addlegendentry{GD + sampling}
\end{axis}

\end{tikzpicture}}} 
\hspace{1cm}
\subcaptionbox{DD decoder.\label{fig:bimodal2_DD}}
{\scalebox{0.75}{
\begin{tikzpicture}

\definecolor{color0}{rgb}{0.12156862745098,0.466666666666667,0.705882352941177}
\definecolor{color1}{rgb}{1,0.498039215686275,0.0549019607843137}
\definecolor{color2}{rgb}{0.172549019607843,0.627450980392157,0.172549019607843}
\definecolor{color3}{rgb}{0.83921568627451,0.152941176470588,0.156862745098039}
\definecolor{color4}{rgb}{0.580392156862745,0.403921568627451,0.741176470588235}
\definecolor{color5}{rgb}{0.549019607843137,0.337254901960784,0.294117647058824}

\begin{axis}[
legend cell align={left},
legend style={
  fill opacity=0.8,
  draw opacity=1,
  text opacity=1,
  at={(0.03,0.03)},
  anchor=south west,
  draw=white!80!black
},
tick align=outside,
tick pos=left,
x grid style={white!69.0196078431373!black},
xlabel={Number of tests used},
xmajorgrids,
xmin=80, xmax=520,
xtick style={color=black},
y grid style={white!69.0196078431373!black},
ylabel={FP rate},
ymajorgrids,
ymin=0.0663961017176521, ymax=1.04441662372307,
ytick style={color=black}
]
\addplot [semithick, color0, mark=*, mark size=3, mark options={solid}]
table {%
100 0.999932743556488
200 0.992993745097253
300 0.953836806045474
400 0.810281849562637
500 0.568053669505308
};
\addlegendentry{CCW}
\addplot [semithick, color1, mark=square*, mark size=3, mark options={solid}]
table {%
100 0.9999611454501
200 0.99111056360684
300 0.888606262386038
400 0.651464512514404
500 0.417159111336404
};
\addlegendentry{CCA}
\addplot [semithick, color2, mark=triangle*, mark size=3, mark options={solid,rotate=180}]
table {%
100 0.996100796589247
200 0.94622025031957
300 0.814158184759527
400 0.679815958199373
500 0.520917362408637
};
\addlegendentry{GD + 0init}
\addplot [semithick, color3, mark=triangle*, mark size=3, mark options={solid,rotate=270}]
table {%
100 0.995793232848748
200 0.863339475158004
300 0.569495552241633
400 0.246699458905114
500 0.191531777147444
};
\addlegendentry{GD + CCWinit}
\addplot [semithick, color4, mark=asterisk, mark size=3, mark options={solid}]
table {%
100 0.996519265620225
200 0.903029812588245
300 0.570382457415971
400 0.290869711041699
500 0.214599755540477
};
\addlegendentry{GD + CCAinit}
\addplot [semithick, color5, mark=star, mark size=3, mark options={solid}]
table {%
100 0.991385685308251
200 0.852045417107009
300 0.507875912575544
400 0.251641002760079
500 0.110851579990626
};
\addlegendentry{GD + sampling}
\end{axis}

\end{tikzpicture}}} 
\caption{Priors sampled from a discrete bimodal distribution (priors take value 0.02 or 0.5) with mean 0.1, $N = 1000$. We average over 10 such instances.}
\end{figure*}

\end{document}